\documentclass[letterpaper,journal]{IEEEtran} 
\usepackage{amsmath,amsfonts}
\usepackage{array}
\usepackage[caption=false,font=normalsize,labelfont=sf,textfont=sf]{subfig}
\usepackage{amssymb}
\usepackage{longtable}
\usepackage{textcomp}
\usepackage{stfloats}
\usepackage{url}
\usepackage{verbatim}
\usepackage{graphicx}
\usepackage{xcolor}
\usepackage{fancyhdr}

\definecolor{customColor}{HTML}{FBB41A}

\usepackage{multirow}
\usepackage{booktabs}
\usepackage{cite}
\usepackage{hyperref}

\usepackage{cleveref}

\usepackage{amsthm}
\usepackage[ruled,vlined]{algorithm2e}

\newtheorem{theorem}{Theorem}

\usepackage{algorithmic}

\begin{document}

\title{Q-Tag: Watermarking Quantum Circuit Generative Models}

\author{
Yang Yang, Yuzhu Long, Han Fang, Zhaoyun Chen, Zhonghui Li, Weiming Zhang, Guoping Guo
\thanks{
This work was supported in part by the National Natural Science Foundation of China under Grant 62272003, in part by the Quantum Science and Technology-National Science and Technology Major Project under Grant 2021ZD0302300, in part by the Science and Technology Major Project of Anhui Province under Grant 202423s06050001 and in part by the National Key Research and Development Program of China (Grant Nos. 2023YFB4502500 and 2024YFB4504100)(Corresponding author: Han Fang and Weiming Zhang.)

Yang Yang is with the School of Electronic and Information Engineering, Anhui University, Hefei, Anhui 230601, China, and the Institute of Artificial Intelligence, Hefei Comprehensive National Science Center, Hefei, Anhui, 230088, China (e-mail: sky\_yang@ahu.edu.cn).

Yuzhu Long is with the School of Electronic and Information Engineering, Anhui University, Hefei, Anhui 230601, China (e-mail: p23301192@stu.ahu.edu.cn).  
  
Han Fang is with the School of Computing, National University of Singapore, Singapore (e-mail: fanghan@nus.edu.sg)

Zhaoyun Chen is with the Institute of Artificial Intelligence, Hefei Comprehensive National Science Center, Hefei, Anhui, 230088, China.

%Kejiang Chen is with the School of Cyber Science and Technology, University of Science and Technology of China, Hefei, Anhui 230026, China.

Zhonghui Li is with the School of Electronic and Information Engineering, Anhui University, Hefei, Anhui 230601, China

Weiming Zhang is with the School of Cyber Science and Technology, University of Science and Technology of China, Hefei, Anhui 230026, China (e-mail: zhangwm@ustc.edu.cn).

Guoping Guo is with the Laboratory of Quantum Information, University of Science and Technology of China, Hefei, Anhui, 230026, China, and the Origin Quantum Computing Technology Company, Hefei, Anhui, 230026, China.

}
}

% \markboth{SUBMITTED TO IEEE TRANSACTIONS ON SERVICES COMPUTING}%
% {YY \MakeLowercase{\textit{et al.}}: Q-Tag: Watermarking Quantum Circuit Generative Models}

\maketitle

\thispagestyle{fancy}
\fancyhead[L]{\scriptsize \leftmark}  
\fancyhead[R]{\thepage} 
\lfoot{}

% \cfoot{\small Copyright © 2025 IEEE. Personal use of this material is permitted. However, permission to use this material for any other purposes must be obtained from the IEEE by sending an email to pubs-permissions@ieee.org.}

\renewcommand{\headrulewidth}{0mm}

\begin{abstract}
Quantum cloud platforms have become the most widely adopted and mainstream approach for accessing quantum computing resources, due to the scarcity and operational complexity of quantum hardware. In this service-oriented paradigm, quantum circuits, which constitute high-value intellectual property, are exposed to risks of unauthorized access, reuse, and misuse. Digital watermarking has been explored as a promising mechanism for protecting quantum circuits by embedding ownership information for tracing and verification. However, driven by recent advances in generative artificial intelligence, the paradigm of quantum circuit design is shifting from individually and manually constructed circuits to automated synthesis based on quantum circuit generative models (QCGMs). In such generative settings, protecting only individual output circuits is insufficient, and existing post hoc, circuit-centric watermarking methods are not designed to integrate with the generative process, often failing to simultaneously ensure stealthiness, functional correctness, and robustness at scale. These limitations highlight the need for a new watermarking paradigm that is natively integrated with quantum circuit generative models. In this work, we present \textbf{\textit{the first watermarking framework for QCGMs}}, which embeds ownership signals into the generation process while preserving circuit fidelity. We introduce a symmetric sampling strategy that aligns watermark encoding with the model’s Gaussian prior, and a synchronization mechanism that counteracts adversarial watermark attack through latent drift correction. Empirical results confirm that our method achieves high-fidelity circuit generation and robust watermark detection across a range of perturbations, paving the way for scalable, secure copyright protection in AI-powered quantum design.

\end{abstract}

\begin{IEEEkeywords}
Quantum Circuit, IP Protection, Latent Diffusion Model, Quantum Circuit Generative Models, Watermarking.
\end{IEEEkeywords}

\section{Introduction}\label{Introduction}
\IEEEPARstart{Q}{uantum} machine learning (QML)~\cite{QML,quantummachinelearning} has emerged as a promising frontier in quantum computing, offering the potential to accelerate tasks in classification, regression, generative modeling, and beyond by exploiting quantum parallelism~\cite{TextClassification,GenerativeLearning,AlphaTensor_2024,genQC}. These capabilities are poised to revolutionize fields ranging from chemistry and cryptography to machine learning~\cite{lu2024data,fesquet2024demonstration,jerbi2024shadows}. As quantum algorithms grow increasingly complex and application-driven, quantum circuits, the executable representations of quantum algorithms, play a central role in bridging theoretical models and real hardware implementation~\cite{informationscramblingqc}. In essence, quantum circuits form the concrete runtime layer of QML, translating high-level abstractions into physical gate operations. Given their function-specific structure and optimization, these circuits are not generic but encode significant intellectual property (IP)~\cite{IP-AI,IP-DEEPW,IP-DEEPIPR}, often the result of proprietary design or computationally expensive synthesis.

Quantum cloud platforms have become the most widely adopted and mainstream approach for accessing quantum computing resources, largely due to the scarcity and operational complexity of quantum hardware. Leading platforms such as IBM Quantum~\cite{IBM}, Amazon Braket~\cite{Amazon}, and Origin Quantum~\cite{OriginQuantum} provide standardized cloud-based interfaces that allow users to remotely submit and execute quantum circuits without maintaining physical quantum devices, thereby significantly lowering the barrier to quantum computing and enabling scalable access to quantum resources~\cite{Majumder2025QuanFraud}. However, outsourcing circuit execution to quantum cloud platforms also introduces non-negligible intellectual property (IP) risks. Once uploaded to the cloud, quantum circuits may be stored, inspected, replicated, or reused by untrusted third parties without the designer’s consent or awareness. To mitigate these risks, several studies have explored post hoc quantum circuit watermarking techniques that embed ownership information directly into quantum circuits via appending a rotation gate on ancilla qubits combined with other gates at the circuit output, or inserting a pair of random gates and their inverses in the circuit interior separated by barriers~\cite{Roy2025Watermarking}, enabling copyright verification and forensic tracing after deployment. These approaches represent early attempts to protect circuit-level intellectual property in quantum cloud environments.

Driven by recent advances in generative artificial intelligence, the paradigm of quantum circuit design is undergoing a fundamental shift from individually and manually constructed circuits to automated synthesis based on quantum circuit generative models (QCGMs)~\cite{genQC}. By leveraging data-driven and learning-based approaches, QCGMs enable efficient and scalable generation of quantum circuits conditioned on task objectives or optimization criteria, making them particularly suitable for deployment as cloud-based quantum design services. This shift in design paradigm fundamentally changes the requirements for copyright protection. In generative settings, quantum circuits are no longer isolated design artifacts but outputs sampled from a learned generative process, rendering the protection of individual circuit instances insufficient. Instead, copyright protection must be tightly coupled with the generative mechanism itself, ensuring that all synthesized circuits inherently carry consistent and verifiable ownership information.

Under these requirements, existing post hoc, circuit-centric watermarking methods become inadequate for two key reasons. First, they rely on explicit modifications to circuit structures or gate sequences, which may disrupt the generative distribution and potentially degrade the quality or fidelity of generated circuits. Second, their robustness is limited when circuits undergo common post-generation operations such as optimization, editing, or redistribution, making them unreliable for large-scale, automated generative settings. These challenges call for watermarking mechanisms that are natively integrated into quantum circuit generative models, enabling model-level copyright protection while preserving generation quality and robustness.

To address this emerging need, we present the first watermarking framework Q-Tag tailored to QCGMs, establishing a foundation for scalable and model-level copyright protection in AI-driven quantum design. Our method embeds ownership signals directly into the generative process by controlling the latent initialization through a symmetric sampling mechanism, ensuring statistical alignment with the model’s Gaussian prior and avoiding generation artifacts. Furthermore, to ensure robustness against watermark attacks, we introduce a synchronization restoration strategy that corrects latent drift via proactive zero-padding techniques. Empirical evaluations across multiple scenarios demonstrate that our method consistently preserves generation fidelity while enabling reliable watermark detection under various adversarial conditions.

\begin{itemize}[]
\item We are the first to focus the pioneering demand of QCGM copyright protection and proposed an effective watermarking mechanism for potential IP risk in quantum cloud platforms.

\item The proposed mechanism can sufficiently satisfy two properties of QCGM watermarking that are losslessness and robustness. Especially for robustness, we first investigate the satisfy the robustness of four unique synchronization attacks in circuit editing process.

\item Extensive experiments show that the proposed method achieves superior performance in both losslessness and robustness. In particular, the watermark detection rate can reach 99.3\% even under modification attacks.
\end{itemize}

\begin{figure}[t]
\centering
\includegraphics[width=1.0\linewidth]{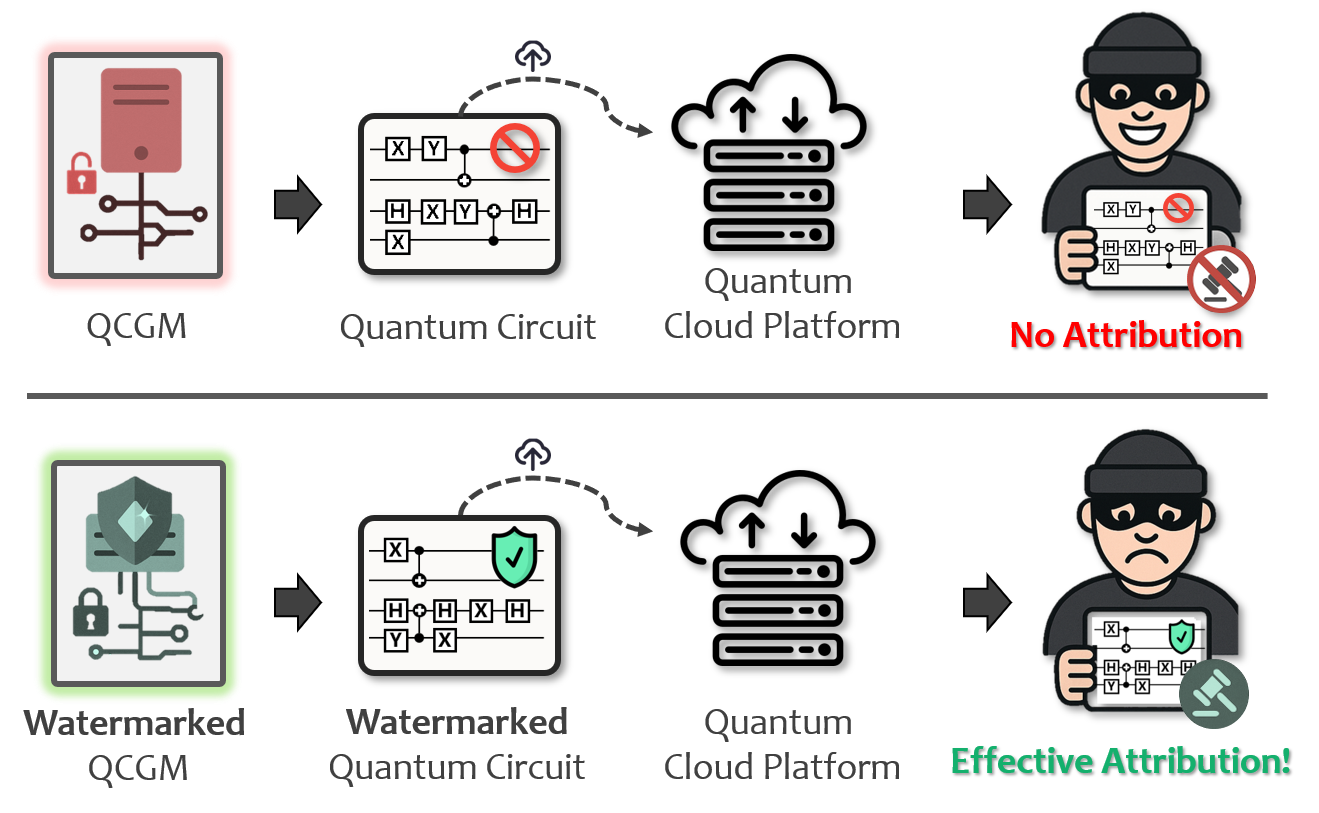}
\caption{\textbf{ Watermarking for quantum circuit generative model.} \textit{Top}: Standard QCGMs generate circuits without embedded ownership, enabling misuse via cloud deployment with no attribution. \textit{Bottom}: Our watermarked QCGM embeds verifiable ownership into each generated circuit, enabling attribution and accountability even after unauthorized redistribution.}
\label{fig_1}
\end{figure}

\section{Related Works}\label{related works}
\subsection{Quantum Circuit Synthesis}
Quantum computing is widely regarded as a transformative technology with immense potential and promising applications in a wide range of fields from basic research in cryptography~\cite{cryptography} and chemistry~\cite{chemistry} to machine learning~\cite{machinelearning} and optimization~\cite{QAQA}.
Despite the continuous advancements in quantum processor technology, the Noisy Intermediate-Scale Quantum (NISQ)~\cite{NISQ} era still confronts numerous challenges, primarily stemming from the inherent limitations of the current devices.
Therefore, efficient quantum circuits must be designed based on available resources and gate sets, which requires further research and expertise to explore optimal gate design and circuit construction strategies.

This task is often referred to as quantum circuit synthesis and mainly includes the following aspects: In order to generate the quantum state of the target output under the conditions of a given quantum processor, a series of basic elements (called quantum gates) need to be designed to generate the quantum state from the reference or standard input state~\cite{genQC}. 
Similarly, the goal may not be to generate only the desired quantum states but to implement quantum algorithms or more general unitary operations. 
In particular, researchers have recently demonstrated important advances in circuit synthesis by utilizing machine learning techniques, from reinforcement learning to generative models, such as in quantum state preparation~\cite{DigitalQuantum}, prediction~\cite{Prepareansatzvqediffusion}, circuit optimization~\cite{quantumcircuitoptimizationdeep}, or unified compilation~\cite{HybridDC}. 

\subsection{Diffusion Model}
Diffusion models are a class of generative models designed to learn the diffusion process that generates a probability distribution for a given training data set. 
The Denoising Diffusion Probability Model (DDPM) introduced by Jonathan Ho et al~\cite{DDPM}. has attracted much attention. It consists of two Markov chains for adding and removing noise.
Its training process is divided into two stages: the initial stage is to gradually add Gaussian noise to the image, which is called the forward process; Subsequently, during the reverse process, the model parameters are trained so that it can learn the inversion of the noise. 
Subsequent studies~\cite{DDIM, Dpm-solve} all adopted this bi-directional chain framework. 
In order to reduce computational complexity and improve efficiency, the latent diffusion model (LDM)~\cite{latentdiffusionmodels} came into being, in which the diffusion process takes place in the latent space. 

With the emergence of potentially diffused LDM and its flexible guidance and regulation in generating capacity~\cite{classifier}, diffusion models have become the preferred method for image generation~\cite{images}, audio synthesis~\cite{diffwave}, video generation~\cite{video}, and protein structure prediction~\cite{Denovo}. 
Recently, a quantum circuit generation model~\cite{genQC} has been proposed to retrain the diffusion model by encoding a large quantum circuit data set. 
After training, a tensor of random noise is input into the model. 
The model iteratively de-noises under the guidance of selected conditions and generates high-quality quantum circuit samples after several steps.

\subsection{Generative Watermarking}
Watermarking effectively solves copyright protection and content authentication by embedding copyright or traceable identification information in carrier data. 
Generative watermarking is a technology used to identify and protect the copyright of related content in generative tasks. 
Its concept comes from the new application of digital watermarking in generative tasks. Different from traditional watermarking and deep learning-based watermarking, on the one hand, generative watermarking can embed the watermark information in the original data set to prevent the data set from being used to train the generative model without authorization;
On the other hand, it is also possible to combine the watermark embedding with the generation process, so that the watermark is integrated into the generated content, thus protecting the copyright of the generated content. 
With the breakthrough of the diffusion model in the field of high-quality content generation, researchers have gradually turned their attention to the generated content protection strategy based on the diffusion model.
To meet this demand, existing watermarking methods include post-processing~\cite{process1, process2, process3, process4, process5, process6}, fine-tuning generation models~\cite{TheStableSignature, fine-tuning3, fine-tuning4, fine-tuning5, fine-tuning6}, and implicit watermark embedding~\cite{latent1, latent2, latent3, GaussianShading}. 
Most of the existing generation model watermarking techniques are designed with images as the carrier. 
Faced with the novel carrier of the quantum circuit, the existing watermarking methods can not apply the correlation generation task well. 
Therefore, this paper deeply explores and effectively solves the problems faced by the model watermarking generation of quantum circuits.

\subsection{Watermarking of Quantum Circuits}
Digital watermarking has been proposed as a method to protect the copyright of quantum circuits while preserving their functionality.
Existing methods for watermarking quantum circuits can be broadly categorized into two main types.
The first type embeds watermarks during the deployment of the circuits, such as the decomposition, mapping, and scheduling stages of quantum circuit.~\cite{Yang2024Multistage}.
These methods rely on auxiliary information introduced during the deployment phase; consequently, they offer no copyright protection if the circuit is compromised or copied prior to deployment.
The second type, in contrast, embeds watermarks in a circuit by adding additional quantum gates, modifying the structure while maintaining its functional integrity~\cite{Vedika2021Decomposition,Roy2025Watermarking}.
These approaches typically embed watermarks by inserting redundant quantum gates, which incurs additional circuit overhead and fails to account for robustness against common post-generation transformations—such as optimization passes, noise-induced distortions, or adversarial tampering.
In summary, existing watermarking schemes are either deployment-dependent or structurally intrusive, and neither paradigm ensures reliable, robust, and intrinsic ownership protection in generative or pre-deployment settings

% \begin{figure*}[!t]
% \centering
% \includegraphics[width=0.9\linewidth]{changes_0519.png}
% % \caption{Changes in the encoded latent.}
% \caption{Effects of circuit modifications on the encoded matrix. Edits to quantum circuits may alter the encoded matrix either by changing specific values or shifting its structural layout. Subfigures illustrate the impact of different edit operations: (a) replacement, (b) appending, (c) insertion, and (d) deletion. Each operation disrupts the latent of the quantum circuits, introducing distortions that may hinder accurate watermark extraction.}
% \label{fig_changes}
% \end{figure*}

\section{Method}\label{Method}
% \noindent\textbf{\large{}Watermark Embedding, Attacks and Extraction}{\large\par}
\begin{figure*}[!t]
\centering
\includegraphics[width=1.0\textwidth]{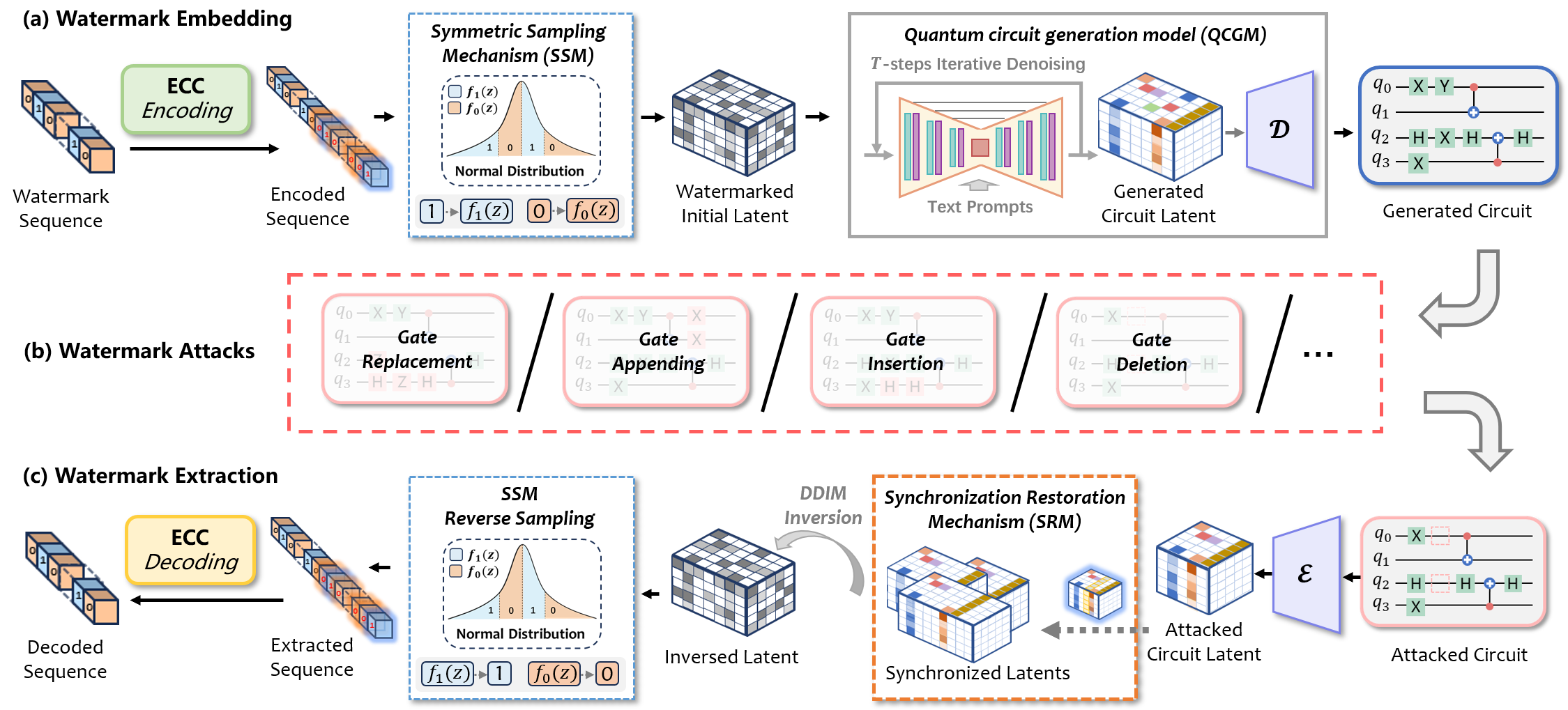}
\caption{Watermark embedding, attacks and extraction process for quantum circuits generative model (QCGM). The watermark is embedded into the starting latent of QCGM through SSM, then the watermarked QC is generated and potentially attacked. The SRM and SSM reverse sampling is applied in the extraction stage to extract the watermark.}
\label{fig_2}
\end{figure*}
A complete overview of our proposed pipeline is provided in Fig.~\ref{fig_2}, comprising three main stages: (a) \textit{watermark embedding phase}, where a binary watermark is injected into the latent input of the QCGM via a symmetric sampling mechanism (SSM); (b) \textit{watermark attacks phase}, where the generated quantum circuit may be subject to structural manipulations intended to remove or disrupt the watermark; and (c) \textit{watermark extraction phase}, where the watermark is recovered and verified from the possibly tampered circuit. During extraction, a synchronization restoration mechanism (SRM) is additionally applied to correct latent misalignments introduced by circuit-level edits prior to watermark decoding.

\subsection{Watermark Embedding.}
To embed a watermark, typically represented as a binary sequence, we first apply error-correcting codes (ECC) to introduce redundancy and enhance robustness against bit-level distortions. The encoded sequence is then spectrally transformed via the symmetric sampling mechanism, producing an initial latent vector that satisfies two essential criteria: (i) compatibility with the QCGM input format, which statistical conformity to the standard Gaussian prior, and (ii) fidelity to the embedded watermark signal. This watermarked latent is subsequently processed by the QCGM through iterative denoising steps, ultimately yielding a quantum circuit with imperceptibly embedded ownership information. The success of this embedding procedure hinges on synthesizing latent vectors that are both semantically meaningful and statistically indistinguishable from standard QCGM inputs, while faithfully encoding the intended watermark. 
%Detailed implementation procedures are described in the Methods section.
The detailed implementation procedures are as follows:

%\subsection{Watermark Embedding Procedure}

To embed a $k$-bit binary sequence $m \in \{0,1\}^k$, i.e., the watermark, the sequence is first encoded using an error correction coding mechanism \( \mathbf{Enc} \) into an $n$-bit codeword \( \mathcal{S}_{\text{en}} = \mathbf{Enc}(m) \), where $n = f_c \times f_h \times f_w$ and $f_c$, $f_h$, and $f_w$ denote the dimensions of the input latent. The encoded sequence \( \mathcal{S}_{\text{en}}=\{s_{\text{en}}^1,s_{\text{en}}^2,\dots,s_{\text{en}}^n\} \) satisfies two key properties: \textbf{pseudorandomness}, meaning that \( \mathcal{S}_{\text{en}} \) is computationally indistinguishable from a uniformly distributed random sequence; and \textbf{error-correctability}, ensuring that the original message $s$ can be reliably recovered even when \( \mathcal{S}_{\text{en}} \) experiences a bounded number of bit errors. Next, SSM is applied to \( \mathcal{S}_{\text{en}} \) to generate the watermarked starting latent \( \mathcal{Z}_T \). This process yields the starting latent \( \mathcal{Z}_T = \{ z_T^1, z_T^2, \dots, z_T^n \} \). Subsequently, \( \mathcal{Z}_T \) is passed through QCGM for $T$ diffusion steps to produce the final latent \( \mathcal{Z}_0 \). Finally, the watermarked quantum circuit \( Q_m \) is generated via the trained decoder \( \mathcal{D} \), following \( Q_m = \mathcal{D}(\mathcal{Z}_0) \).

In this work, the encoding function \( \mathbf{Enc} \) is instantiated using a lightweight combination of \textit{repetition coding} and \textit{stream cipher encryption}, designed to balance robustness and pseudorandomness. Given a \( k \)-bit binary message \( m \in \{0,1\}^k \), we first apply repetition coding by duplicating each bit \( v \) times to obtain a length-\( n \) intermediate sequence \( M \in \{0,1\}^n \), where \( n = k \times v \). This repetition enhances error tolerance by introducing redundancy at the bit level.

Next, we generate a binary pseudorandom key \( K \in \{0,1\}^n \) using a stream cipher, such as a seeded pseudorandom number generator. The final encoded sequence \( \mathcal{S}_{\text{en}} \) is then computed by applying bitwise XOR between the repeated message and the key:
\[
\mathcal{S}_{\text{en}} = M \oplus K
\]

This construction ensures that \( \mathcal{S}_{\text{en}} \) exhibits \textit{pseudorandomness}, rendering it statistically indistinguishable from a uniformly random sequence, while the underlying redundancy introduced by repetition supports \textit{error-correctability} during watermark recovery. The encoded sequence \( \mathcal{S}_{\text{en}} \) is subsequently used to guide the symmetric sampling process for watermark embedding. The whole embedding algorithm is shown in Algorithm 1.

\begin{algorithm}[t]
\caption{Watermark Embedding}
\label{alg:ssm_embedding}
\KwIn{\small Watermark sequence $m \in \{0,1\}^k$; QCGM model $\mathcal{G}$; Decoder $\mathcal{D}$; ECC Encoder $\mathbf{Enc}$; Number of diffusion steps $T$.}
\KwOut{\small Watermarked quantum circuit $Q_m$.}

\textbf{1. Encode watermark sequence:}\\
$\mathcal{S}_{\text{en}} = \mathbf{Enc}(m)$ \tcp*{Apply error correction coding, obtain $n$-bit codeword}

\textbf{2. Symmetric sampling to generate starting latent:}
\For{$i = 1$ \textbf{to} $n$}{
    Sample $z^i \sim \mathcal{N}(0, I)$ \tcp*{Sample from standard Gaussian}
    Partition $\mathcal{N}(0,I)$ into $P_0$ and $P
    _1$ \tcp*{Symmetric partition}
    \eIf{$s_{\text{en}}^i = 0$}{
        \textbf{Adjust} $z^i$ to satisfy $P_0$
    }{
        \textbf{Adjust} $z^i$ to satisfy $P
    _1$
    }
    Set $z_T^i = z^i$
}
Construct starting latent $\mathcal{Z}_T = \{z_T^1, z_T^2, \dots, z_T^n\}$

\textbf{3. Diffusion process:}\\
Apply $T$-step diffusion to $\mathcal{Z}_T$ using QCGM $\mathcal{G}$ to obtain diffused latent $\mathcal{Z}_0$

\textbf{4. Decode quantum circuit:}\\
Generate quantum circuit $Q_m = \mathcal{D}(\mathcal{Z}_0)$

\Return $Q_m$
\end{algorithm}

\subsection{Watermark Attacks.}
In realistic deployment settings, it is crucial to account for an adversarial attacker who, upon identifying the existence of a watermark, may attempt to erase it while preserving circuit functionality. Here, we consider four representative attack operations: gate replacement, appending, insertion, and deletion; each of which subtly alters the structure of the quantum circuit without compromising its computational correctness (Fig.~\ref{fig_changes}).

% \vspace{0.3cm}
% \noindent \textit{Replacement.} 
\subsubsection{Replacement}
Functionally equivalent gate substitu-tions, such as replacing a Pauli-X gate with a Hadamard
–Z–Hadamard sequence (Fig.~\ref{fig_changes}a), preserve circuit behavior but alter its internal structure, introducing mismatches with the expected watermark spectral signature.

% \vspace{0.3cm}
\subsubsection{Appending} 
Appending gates to the end of unmeasured qubit lines (Fig.~\ref{fig_changes}b) can leave final measurement outcomes unaffected, yet distort spatial encodings critical for watermark recovery.

% \vspace{0.3cm}
\subsubsection{Insertion} 
Mid-circuit insertions of neutral gate pairs, such as two Hadamard gates satisfying $U^\dagger U = I$, and in particular $U^\dagger = U$ (Fig.~\ref{fig_changes}c), introduce logically redundant operations that can confound alignment with the embedded watermark structure.

% \vspace{0.3cm}
\subsubsection{Deletion} 
Removing non-critical gates (e.g., a Pauli-Z on $q_0$) can similarly preserve circuit output (Fig.~\ref{fig_changes}d), while degrading the positional coherence necessary for robust watermark decoding.

\begin{algorithm}[t]
\caption{Watermark Extraction}
\label{alg:extraction_SRM}
\KwIn{\small Distorted quantum circuit $Q_d$; Encoder $\mathcal{E}$; ECC Decoder $\mathbf{Dec}$; Threshold $\tau$; Number of diffusion steps $T$.}
\KwOut{\small Extracted watermark $\hat{m}$ or detection result.}

\textbf{1. Encode the circuit:}\\
Obtain latent representation $\hat{\mathcal{Z}}_0 = \mathcal{E}(Q_d)$

\textbf{2. Apply DDIM inversion:}\\
Perform $T$-step DDIM inversion on $\hat{\mathcal{Z}}_0$ to get inverted latent $\hat{\mathcal{Z}}_T$

\textbf{3. Standard Extraction:}\\
Perform reverse sampling on each element $\hat{z}_T^i$ to generate decoded sequence $\hat{\mathcal{S}}_{de}$\;
Decode watermark $\hat{m} = \mathbf{Dec}(\hat{\mathcal{S}}_{de})$\;

\If{$\mathcal{L}(m, \hat{m}) > \tau$}{
    \Return $\hat{m}$ (Watermark detected)
}

\textbf{4. Synchronization and Verification (SRM):}
\begin{itemize}
    \item Insert zero-matrix $\mathcal{O}$ of size $f_c \times f_h \times k$ at each column position $i \in [1, f_w]$ into $\hat{\mathcal{Z}}_0$ to generate synchronization sets $\bar{\mathbb{Z}}_0$.
    \item \ForEach{ $\bar{\mathcal{Z}}_0^i\in\bar{\mathbb{Z}}_0$ } {
        Perform $T$-step DDIM inversion on $\bar{\mathcal{Z}}_0^i$ to obtain $\bar{\mathcal{Z}}_T^i$\;
        
        Perform reverse sampling to generate $\hat{\mathcal{S}}_{de}^i$\;
        
        Decode $\hat{m}^i = \mathbf{Dec}(\hat{\mathcal{S}}_{de}^i)$\;
        
        \If{$\mathcal{L}(m, \hat{m}^i) > \tau$}{
            \Return $\hat{m}^i$ (Watermark detected)
        }
    }
\end{itemize}

\Return No valid watermark detected.
\end{algorithm}

\subsection{Watermark Extraction.}
To recover the embedded watermark, we first encode the attacked circuit back into the latent domain. A synchronization restoration mechanism (SRM) is then applied to the perturbed latent, generating a set of candidate latents that are better aligned with the original sampling structure. This step mitigates distortions introduced by structural edits and enhances alignment between the extracted latent and the expected watermark pattern. Subsequently, we apply denoising diffusion implicit model (DDIM) inversion to estimate the original watermarked latent from the synchronized candidates. Finally, a reverse sampling procedure, mirroring the SSM used during embedding, is performed, followed by error-correcting code decoding to reconstruct the original watermark bit sequence.
The detailed implementation procedures are as follows:

%\subsection{Watermark Extracting Procedure}

To extract the watermark from the generated quantum circuit $Q_m$, we first encode $Q_m$ using the quantum circuit encoder $\mathcal{E}$ to obtain the latent representation: $\hat{\mathcal{Z}}_{0} = \mathcal{E} (Q_m)$. Subsequently, diffusion model inversion~\cite{DDIM} is applied to $\hat{\mathcal{Z}}_0$ over $T$ steps to recover the inverted latent representation $\hat{\mathcal{Z}}_T$, which approximates the original watermarked latent $\mathcal{Z}_T$. For each element $\hat{z}_T^i \in \hat{\mathcal{Z}}_T$, a reverse sampling procedure is performed to obtain a binary sequence $\hat{\mathcal{S}}_{\text{de}}$. Specifically, if $\hat{z}_T^i \in P_0$, the $i$-th bit $\hat{s}_{\text{de}}^i$ is set to 0; otherwise, it is set to 1. The final extracted watermark $\hat{m}$ is obtained by decoding the sequence $\hat{\mathcal{S}}_{\text{de}} = \{ \hat{s}_{\text{de}}^1, \hat{s}_{\text{de}}^2, \dots, \hat{s}_{\text{de}}^n \}$ using the decoding mechanism $\mathbf{Dec}$ corresponding to the encoder $\mathbf{Enc}$, such that $\hat{m} = \mathbf{Dec}(\hat{\mathcal{S}}_{\text{de}})$.
The extracted watermark $\hat{m}$ is then compared with the original watermark $m$ for verification. If the similarity $\mathcal{L}(m, \hat{m})$ exceeds a predefined threshold $\tau$, the generated quantum circuit $Q_m$ is identified as successfully watermarked with $m$. In this work, the \( \mathbf{Dec} \) procedure can be described as a two-step process. First, the encrypted sequence \( \hat{\mathcal{S}}_{\text{en}} \) is decrypted via bitwise XOR with the known pseudorandom key $K$. Then, majority voting is applied over every \( v \)-bit repetition block to reconstruct the original \( k \)-bit watermark \( \hat{m} \).

We call the above extracting procedure as standard extraction. Then to address the latent misalignment caused by structural perturbations, we propose a synchronization restoration mechanism, which aims to proactively correct positional desynchronization and enable reliable watermark extraction under adversarial conditions. 

\vspace{0.2cm}
\noindent \textit{Synchronization Restoration Mechanism.} Given a distorted quantum circuit $Q_d$, we first obtain its latent representation $\hat{\mathcal{Z}}_0 = \mathcal{E}(Q_d)$ with shape $f_c \times f_h \times f_w$. To compensate for potential misalignment, we construct a zero-initialized matrix $\mathcal{O} \in \mathbb{R}^{f_c \times f_h \times k}$ and systematically inject it at different column positions. For each candidate index $i \in [1, f_w]$, we insert $\mathcal{O}$ at the $i$-th column to generate a compensated latent representation $\bar{\mathcal{Z}}_0^i$, defined as 
\begin{equation} \bar{\mathcal{Z}}_0^i = [\hat{\mathcal{Z}}_0[1:i]; \mathcal{O}; \hat{\mathcal{Z}}_0[i:f_w]]. 
\end{equation} 
This process yields a set of compensated latents $\bar{\mathbb{Z}}_0 = {\bar{\mathcal{Z}}_0^i}, {i=1,2,\dots,f_w}$. We then apply the standard extraction procedure to each $\bar{\mathcal{Z}}_{0}^i \in \bar{\mathbb{Z}}_0$. If any candidate latent leads to a recovered watermark $\hat{m}$ that satisfies the similarity criterion $\mathcal{L}(m, \hat{m}) > \tau$, the circuit $Q_d$ is deemed successfully watermarked.

This strategy hinges on the assumption that $\mathcal{L}(m, \hat{m})$ exceeds the threshold $\tau$ with high probability when alignment is correctly restored. Conversely, in cases of failed compensation or absence of a watermark, the likelihood of false detection remains negligibly low, thus ensuring robust and reliable verification.

\subsection{Robustness Evaluation Metric}  

We evaluate robustness using TPR as the primary metric. The calculation of TPR can be described as follows: For the circuit with watermark $s$, we extract the potential watermark sequence $\hat{s}$ from the circuit. Then we calculate the similarity of $s$ and $\hat{s}$, denoted as $\mathcal{L}(s, \hat{s})$. In our experiment, $\mathcal{L}$ is bitwise similarity, which is defined as the fraction of positions at which the corresponding bits are identical. Formally, it is given by:
$$
\mathcal{L}(s, \hat{s}) = \frac{1}{n} \sum_{i=1}^{n} (s_i = \hat{s}_i),
$$
where the Boolean expression $(s_i = \hat{s}_i)$ evaluates to 1 if the two bits are equal and 0 otherwise, and $n$ is the length of $s$ and $\hat{s}$. A similarity score of 1 indicates perfect agreement, while a score approaching 0 reflects increasing dissimilarity. A watermarked circuit is considered successfully identified if its similarity score exceeds a predefined threshold $\tau$, which we set to 0.7916. This threshold is derived via hypothesis testing to ensure a fixed false positive rate (FPR) of $10^{-3}$.
%detailed methodology is provided in the Supplementary Materials. 
Since each circuit yields a single similarity score, it contributes exactly one instance to the TPR computation. To ensure statistical reliability and broader evaluation coverage, we conduct experiments over 1000 independently generated watermarked circuits. The final true positive rate is computed as: 
$$
\text{TPR} = \frac{N_w}{1000}
$$
where $N_w$ denotes the number of circuits for which $\mathcal{L}(s, \hat{s})\geq \tau$.

\begin{figure*}[!t]
\centering
\includegraphics[width=0.9\linewidth]{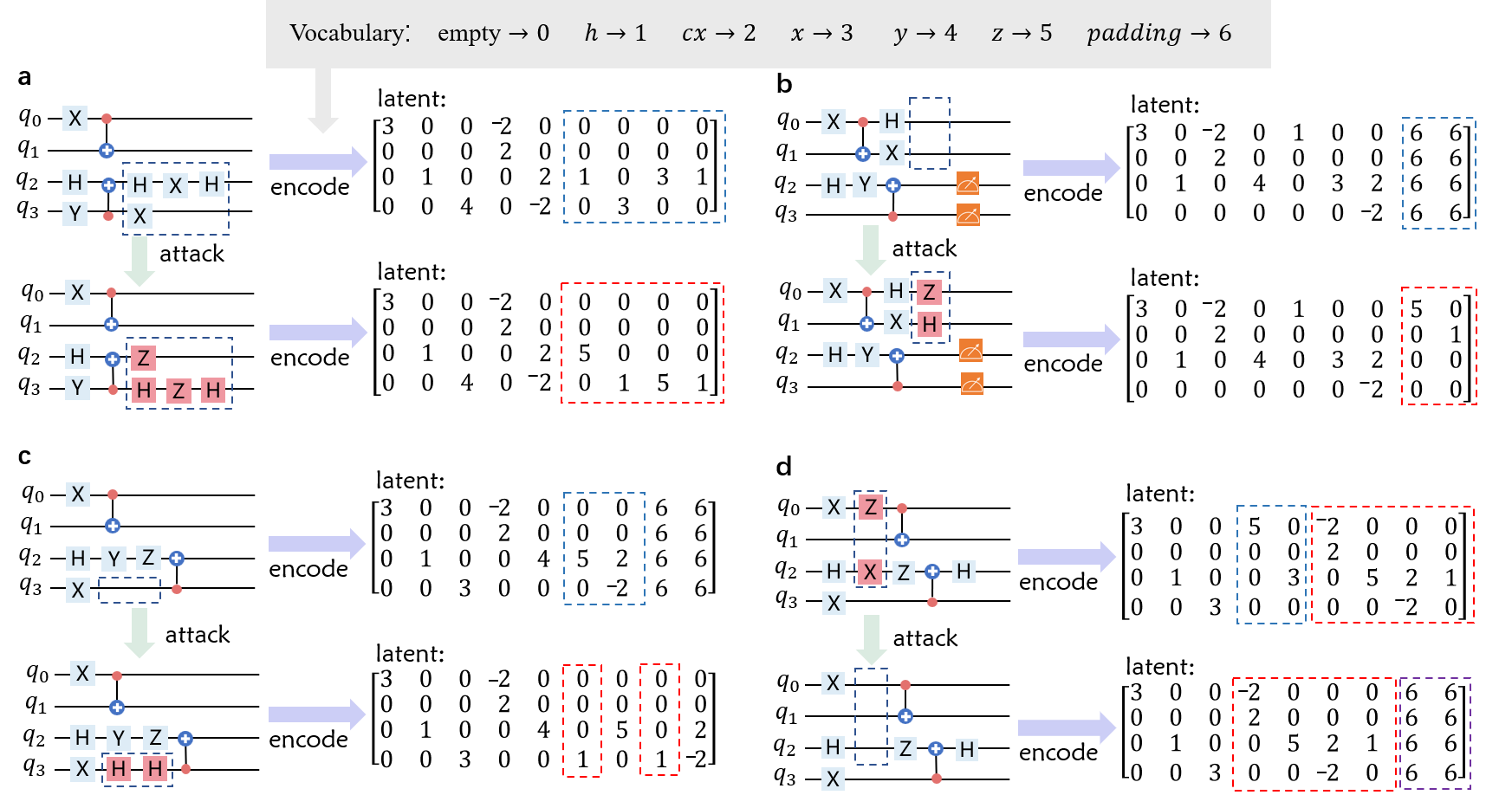}
% \caption{Changes in the encoded latent.}
\caption{Effects of circuit modifications on the encoded matrix. Edits to quantum circuits may alter the encoded matrix either by changing specific values or shifting its structural layout. Subfigures illustrate the impact of different edit operations: (a) replacement, (b) appending, (c) insertion, and (d) deletion. Each operation disrupts the latent of the quantum circuits, introducing distortions that may hinder accurate watermark extraction.}
\label{fig_changes}
\end{figure*}

\section{Experiments}\label{Experiments}
We assess the proposed method along two key dimensions: fidelity and robustness. Fidelity is evaluated through both theoretical analysis and empirical results, demonstrating that the watermark embedding process preserves the generation quality of quantum circuits. Robustness is measured using the true positive rate (TPR), the proportion of correctly identified watermarked circuits among all watermarked instances, which quantifies the reliability of watermark detection under various conditions. To further evaluate generalizability, we analyze how TPR varies with the number of inference and inversion steps. An ablation study is also conducted to isolate the impact of the SRM, highlighting its contribution to enhancing resilience against adversarial watermark attacks.

\subsection{Implementation Details}
In this work, we adopt the quantum circuit generative model~\cite{genQC} as the foundation for our experiments. The model operates within a latent space of dimensionality $4 \times 8 \times 48$, corresponding to quantum circuits composed of 8 qubits with up to 48 quantum gates. During generation, we set the guidance scale to 7.5 and the denoising steps to 50. An empty prompt is used to initiate the generation process, ensuring unbiased latent initialization. For the inversion procedure, we similarly adopt 50 inversion steps as the default setting. The message to be embedded is a 24-bit binary sequence. All experiments are implemented using the PyTorch framework and executed on a single NVIDIA RTX 4090 GPU.

\subsection{Parameter Studies}
\subsubsection{Threshold \texorpdfstring{$\tau$}{\text{tau}}}
In this section, we will introduce how the threshold $\tau$ is determined according to the false positive rate. Recap the extraction process: 
For a specific watermark $m$, the proposed Q-Tag model starts with the latent sampled with $\mathcal{S}_{en}$ to embed the watermark identity information into every generated quantum circuit. Then, assuming $\hat{m}$ is the extracted watermark, the similarity of the two watermark denoted $\mathcal{L}(m, \hat{m})$, is compared with a threshold value $\tau$, when $\mathcal{L}(m, \hat{m})\geq \tau$, the quantum circuit is identified as watermarked.

As discussed in \cite{RootingDeepfake}, it is usually assumed that the extracted watermark bit $\hat{s}_{1}, \hat{s}_{2},..., \hat{s}_{k}$ is independent and equally distributed, where $\hat{s}_{i}$ follows a Bernoulli distribution with parameter 0.5. In this case, $\mathcal{L}(m, \hat{m})$ follows a binomial distribution with parameter $(k,0.5)$. 

Once the distribution of $\mathcal{L}(m, \hat{m})$ is established, the FPR is defined as the probability that the unwatermarked quantum circuit, $\mathcal{L}(m, \hat{m})$, surpasses the threshold $\tau$.
This probability can be computed using the regularized incomplete beta function 
$B_x(a;b)$.
Further elaboration of this expression can be found in \cite{TheStableSignature}.

\begin{equation}
\label{eq_7}
\begin{split}
FPR(\tau) &= \text{Pr}(\text{Acc}(s, \hat{s}) > \tau) = \frac{1}{2^k} \sum_{i=\tau+1}^{k} \binom{k}{i}\\
&= B_{\frac{1}{2}}(\tau+1, k-\tau)
\end{split}
\end{equation}

For the threshold, we embed a random watermark $m$ into QCGM, generate 10000 quantum circuits, and use the test of $\mathcal{L}(m, \hat{m})$.
We report the tradeoff between the TPR and the FPR, while varying $\tau\in\{0,..,24\}$. The TPR is measured directly.
The FPR is inferred from Eq.\ref{eq_7}.
The experiment is run on 10 random watermarks and we report averaged results.
The experimental results are presented in Fig.\ref{TPR_FPR}. A lower FPR inevitably leads to a reduction in TPR, as stricter detection criteria allow for fewer bit errors and result in a higher threshold for successful extraction. Nevertheless, Q-Tag consistently maintains a TPR above 90\% even under the stringent setting of FPR = $10^{-3}$, demonstrating reliable watermark detectability under conservative verification constraints.

\begin{figure*}[!t]
\centering
\includegraphics[width=6.0in]{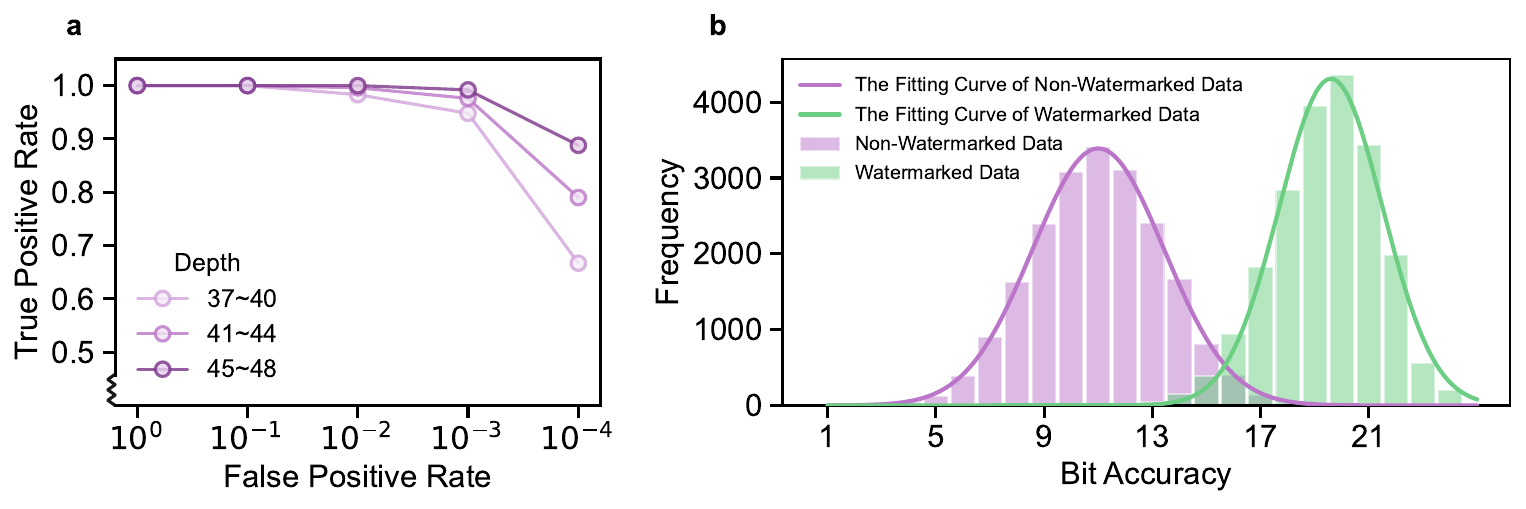}
\caption{Experiment on threshold selection. (a) TPR/FPR. (b) Bit accuracy distribution.}
\label{TPR_FPR}
\end{figure*}

\subsubsection{False Positive of Synchronization Restoration Mechanism}
In our extraction process, a SRM is employed to recover potential misalignments in the latent space. While this enhances robustness, it also increases the extraction payload, which may inadvertently raise FPR. To mitigate this, we introduce a similarity threshold $th$ on the similarity score $\mathcal{L}(m,\hat{m})$, ensuring that only highly confident matches are accepted. This section describes how we determine an appropriate value for $th$.

To calibrate the threshold, we conduct a statistical analysis based on Neyman–Pearson theory. Specifically, we fix a watermark message 
$m$, and generate 20,000 watermarked quantum circuits under various settings. In parallel, another 20,000 circuits are generated using random latent initializations, without any embedded watermark. We apply the full extraction pipeline to both sets and record the similarity scores between $m$ and the extracted message $\hat{m}$, producing two empirical distributions.

As shown in Fig.~\ref{TPR_FPR}(b), the resulting distributions of bit-level accuracy for the watermarked and non-watermarked cases approximately follow Gaussian profiles, but with distinct means and variances. We then apply hypothesis testing to determine the decision threshold $th$, defined under the following hypotheses:

\begin{itemize}
\item 
$H_0$: The circuit does not contain a watermark (null hypothesis).
\item 
$H_1$: The circuit contains a valid watermark (alternative hypothesis).
\end{itemize}

The likelihood function of these two hypotheses can be written as:

\begin{equation}
\label{eq_H0}
f\left( x \vert H_0 \right) = \frac{1}{\sigma_0\sqrt{2\pi}} \cdot e^{-\frac{(x - \mu_0)^2}{2\sigma_0^2}}
\end{equation}

\begin{equation}
\label{eq_H1}
f\left( x \vert H_1 \right) = \frac{1}{\sigma_1\sqrt{2\pi}} \cdot e^{-\frac{(x - \mu_1)^2}{2\sigma_1^2}}
\end{equation}
In the case of Neyman-Pearson (NP) approach in signal detection, the decision rule is defined as

\begin{equation}
\label{eq_HH}
\frac{f(x | H_1)}{f(x | H_0)} \overset{H_0}{\underset{H_1}{\gtrless}} \xi
\end{equation}
that is

\begin{equation}
\label{eq_HH01}
x \overset{H_0}{\underset{H_1}{\gtrless}} g(\xi)
\end{equation}
where $\xi$ is the decision threshold and $g(\xi)$ is a function of $\xi$  which is determined by Eqs.(\ref{eq_H0}),(\ref{eq_H1}),(\ref{eq_HH01}). Let $th = g(\xi)$, the decision rule is equivalent to

\begin{equation}
\label{eq_th}
x \overset{H_0}{\underset{H_1}{\gtrless}} th
\end{equation}
Then, according to the given probability of false-alarm (denoted by $\alpha_0$), which is defined as

\begin{equation}
\label{eq_a0}
\alpha_0 = \int_{\text{th}}^{\infty} f(x | H_0) \, dx
\end{equation}
Based on the empirical distributions, we compute the detection threshold $th$ corresponding to a target false positive rate $\alpha_0=10^{-3}$. Using the fitted parameters-$\mu_0$= 9.00, $\sigma_0$ = 2.43, $\mu_1$= 18.60, $\sigma_1$ = 1.91, we apply Neyman–Pearson analysis to determine the optimal threshold. According to Eq.~\ref{eq_a0}, the resulting value is $th=16.5$, meaning that circuits with fewer than 16.5 accurately extracted bits are classified as non-watermarked.

In our implementation, we conservatively set the final verification threshold $\tau=0.7916$, which corresponds to 19 correctly extracted bits—well above the decision boundary of $th=16.5$. This ensures that both the compensation process and watermark detection maintain a false positive rate below $10^{-3}$, thus preserving detection reliability under practical deployment conditions.

\subsection{Fidelity evaluation}\label{Fidelity evaluation}

Achieving high-fidelity watermark embedding is nontrivial, particularly in the context of QCGMs that rely on strict statistical assumptions about their inputs. In QCGMs, the generation process begins from a starting latent vector sampled from a standard Gaussian distribution. This latent serves as the origin of the diffusion trajectory and directly determines the structure of the resulting quantum circuit. Embedding watermark information into this latent is appealing, as it provides a direct and global influence over the output. However, naively modifying the latent to encode information risks violating the Gaussian prior, which may lead to distributional mismatch, degraded generation quality, or even failure modes in the denoising process.

Our motivation, therefore, is to enable watermark embedding at this sensitive entry point while preserving the statistical integrity of the latent space. To this end, we propose a symmetric sampling mechanism that injects watermark information through controlled perturbations that maintain exact adherence to the standard Gaussian distribution. This distribution-preserving design ensures compatibility with the QCGM’s generative assumptions, allowing the watermark to be imperceptibly embedded without compromising fidelity. By addressing the tension between expressiveness and distributional correctness, our approach enables robust and high-quality watermarking within diffusion-based generative pipelines.

In this section, we evaluate the effectiveness of the proposed SSM with respect to fidelity. We begin by outlining the design of SSM. Given a standard Gaussian distribution \( f(z) \), we split it into two symmetric partition \( P_0 \) and \( P_1 \), which are mirror-symmetric about $z=0$. For each bit in the binary watermark sequence to be embedded \( s_{\text{en}}^i \), we sample a value \( z^i \sim \mathcal{N}(0, I) \). If \( z^i \in P{s_{\text{en}}^i} \), we set \( z_T^i = z^i \); otherwise, we flip its sign and set \( z_T^i = -z^i \). For instance, if \( s_{\text{en}}^i = 0 \), then \( P_{s_{\text{en}}^i} = P_0 \), and similarly for \( s_{\text{en}}^i = 1 \).

% \vspace{0.4cm}
% \noindent \textbf{\textit{Theoretical Proof.} }
\subsubsection{Theoretical Proof}
We provide a theoretical proof that each element \( z_T^i \) generated via the SSM follows a standard Gaussian distribution, i.e.,  \( z_T^i \sim \mathcal{N}(0, I)\).
\begin{theorem}[Distribution Preservation of SSM]
Let \( z^i \sim \mathcal{N}(0, I) \) be a standard Gaussian variable, and let \( s_{\mathrm{en}}^i \in \{0, 1\} \) be a pseudorandom bit sampled uniformly at random. Let \( z_T^i \) be the sample generated by the symmetric sampling mechanism (SSM), defined as:
\[
z_T^i =
\begin{cases}
z^i, & \text{if } z^i \in P_{s_{\mathrm{en}}^i}, \\\\
- z^i, & \text{otherwise},
\end{cases}
\]
where \( P_0 \) and \( P_1 \) are symmetric, equiprobable partitions of the Gaussian distribution with respect to the origin (i.e., \( \Pr(z^i \in P_0) = \Pr(z^i \in P_1) = 0.5 \), and \( P_1 = -P_0 \)). Then \( z_T^i \sim \mathcal{N}(0, I) \).
\end{theorem}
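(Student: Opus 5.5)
My plan is to compute the distribution of \(z_T^i\) directly by conditioning on the bit \(s_{\mathrm{en}}^i\). I would then show that, for every measurable set \(A\), \(\Pr(z_T^i\in A)=\Pr(z^i\in A)\). I treat \(z^i\) and \(s_{\mathrm{en}}^i\) as independent, which is implicit in the construction. I also assume the boundary \(P_0\cap P_1\) (e.g.\ \(\{0\}\) when \(P_0=[0,\infty)\)) is a null set, so \(P_0\) and \(P_1\) partition \(\mathbb{R}\) up to measure zero.

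\textbf{Step 1: the conditional law.} Fix \(s_{\mathrm{en}}^i=b\). By construction \(z_T^i\in P_b\) almost surely. For a measurable \(A\subseteq P_b\),
\[
\Pr(z_T^i\in A\mid b)=\Pr(z^i\in A)+\Pr\bigl(z^i\in P_{1-b},\,-z^i\in A\bigr)=\Pr(z^i\in A)+\Pr(z^i\in -A).
\]
Here \(-A\subseteq -P_b=P_{1-b}\), so the joint condition reduces to \(-z^i\in A\). Since the standard Gaussian density \(\phi\) is even, \(\Pr(z^i\in -A)=\Pr(z^i\in A)\). Hence the conditional density of \(z_T^i\) given \(b\) is \(2\phi(z)\mathbf{1}_{P_b}(z)\): a half-Gaussian restricted to \(P_b\).

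\textbf{Step 2: mixing over the bit.} Because \(s_{\mathrm{en}}^i\) is uniform on \(\{0,1\}\), the marginal density is
\[
\tfrac12\cdot 2\phi(z)\mathbf{1}_{P_0}(z)+\tfrac12\cdot 2\phi(z)\mathbf{1}_{P_1}(z)=\phi(z)\bigl(\mathbf{1}_{P_0}+\mathbf{1}_{P_1}\bigr)(z)=\phi(z)
\]
almost everywhere. This is exactly \(\mathcal{N}(0,1)\). For the vector case \(\mathcal{N}(0,I)\) the same argument applies coordinatewise. Independence of the coordinates of \(z\) and of the bits then gives the product law \(\mathcal{N}(0,I)\) for the whole latent \(\mathcal{Z}_T\).

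\textbf{Main obstacle.} The only delicate point is Step 1: I must justify that reflection maps \(P_{1-b}\) onto \(P_b\) and preserves Gaussian measure. This rests on \(P_1=-P_0\), the evenness of \(\phi\), and the null boundary. The equiprobability \(\Pr(P_0)=\Pr(P_1)=\tfrac12\) is what makes the factor \(2\) normalize correctly.

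A second point concerns the pseudorandomness of \(s_{\mathrm{en}}^i\). The statement assumes uniform bits, so the proof is exact in that setting. If the bits are only computationally pseudorandom, the conclusion holds up to computational indistinguishability, which I would remark on but not prove.
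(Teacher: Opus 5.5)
Your proof is correct, and it is organized differently from the paper's. The paper works with the unconditional CDF. It splits $\Pr(z_T^i\le x)$ by the two branches of the mechanism and factorizes each term, e.g.\ $\Pr(z^i\le x,\ z^i\in P_{s_{\mathrm{en}}^i})=\Pr(z^i\le x)\cdot\tfrac12$. The symmetry $\Pr(z^i\ge -x)=\Pr(z^i\le x)$ then closes the computation in one line. You instead condition on the bit value $b$ first, identify the conditional law as the half-Gaussian $2\phi\,\mathbf{1}_{P_b}$, and only then mix over $b$.

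The paper's route is shorter, but its factorization step silently does the real work. For a fixed bit it is false: take $P_0=[0,\infty)$, $s_{\mathrm{en}}^i=0$ and $x<0$. Then the left side is $0$, while the right side is $\Phi(x)/2>0$, with $\Phi$ the standard normal CDF. The step becomes valid only because, for $s_{\mathrm{en}}^i$ uniform and independent of $z^i$, $\Pr(z^i\in P_{s_{\mathrm{en}}^i}\mid z^i)=\tfrac12$ almost everywhere, so that event is independent of $z^i$. The paper uses this without stating it.

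Your conditioning makes the point explicit and shows the structure behind extraction. Given the bit, $z_T^i$ is confined to $P_b$ and is not Gaussian; Gaussianity holds only marginally over a uniform bit. That is exactly why your caveat about pseudorandom versus truly uniform bits is the right one. The same goes for the whole latent, which needs the bits independent across coordinates: $\mathcal{S}_{\text{en}}=M\oplus K$ is i.i.d.\ uniform only if $K$ is. You also handle the null boundary $P_0\cap P_1$, which the paper leaves implicit.
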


\begin{proof}
To show that \( z_T^i \sim \mathcal{N}(0, I) \), it suffices to prove that for any \( x \in \mathbb{R} \), the cumulative distribution function satisfies
\[
\Pr(z_T^i \leq x) = \Pr(y^i \leq x),
\]
where \( y^i \sim \mathcal{N}(0, I) \).

According to SSM, we have
\begin{align}
\Pr(z_T^i \leq x) &= \Pr(z^i \leq x,\, z^i \in P_{s_{\mathrm{en}}^i}) \nonumber \\
&+ \Pr(-z^i \leq x,\, z^i \in P_{1 - s_{\mathrm{en}}^i}) \nonumber \\
&= \Pr(z^i \leq x) \cdot \Pr(z^i \in P_{s_{\mathrm{en}}^i}) \nonumber \\
&+ \Pr(z^i \geq -x) \cdot \Pr(z^i \in P_{1 - s_{\mathrm{en}}^i}) \label{eq:ssm-cdf}
\end{align}

By the symmetry of the Gaussian distribution and the equiprobability of partitions,
\begin{align}
\Pr(z^i \in P_{s_{\mathrm{en}}^i}) &= \Pr(z^i \in P_{1 - s_{\mathrm{en}}^i}) = 0.5, \nonumber \\
\Pr(z^i \geq -x) &= \Pr(z^i \leq x). \nonumber
\end{align}

Substituting into Eq.~\eqref{eq:ssm-cdf} gives
\begin{align}
\Pr(z_T^i \leq x) &= 0.5 \cdot \Pr(z^i \leq x) + 0.5 \cdot \Pr(z^i \leq x) \nonumber \\
&= \Pr(z^i \leq x). \nonumber
\end{align}
which matches the CDF of \( \mathcal{N}(0, I) \). 
Therefore, \( z_T^i \sim \mathcal{N}(0, I) \).
\renewcommand{\qedsymbol}{}
\end{proof}

% \vspace{0.2cm}
% \noindent \textbf{\textit{Empirical Evidence.}}
\subsubsection{Empirical Evidence}
In addition to the theoretical guarantee, we empirically assess fidelity by evaluating the indistinguishability between watermarked and non-watermarked latents. Specifically, we employ t-distributed stochastic neighbor embedding (t-SNE) ~\cite{van2008visualizing} to visualize their distributions in a reduced dimensional space, allowing us to qualitatively examine whether watermark embedding preserves the underlying structure of the latent space. We generate 500 initial latent vectors embedded with watermarks $\mathcal{Z}_T^{\mathcal{W}}$ and 500 without $\mathcal{Z}_T^{\emptyset}$, and and propagate both sets through the generative process to obtain the final latents prior to variational autoencoder (VAE) decoding ($\mathcal{Z}_0^{\mathcal{W}}$ and $\mathcal{Z}_0^{\emptyset}$). These latents are then projected into two-dimensional space using t-SNE, as shown in Fig.~\ref{fig_tsne}(a,b). The resulting visualizations reveal no clear separation between the two distributions, either before or after diffusion. This empirical evidence supports the conclusion that the proposed watermarking strategy preserves the statistical properties of the latent space, introducing no observable distributional bias.

\begin{figure}[!t]
\centering
\includegraphics[width=1\linewidth]{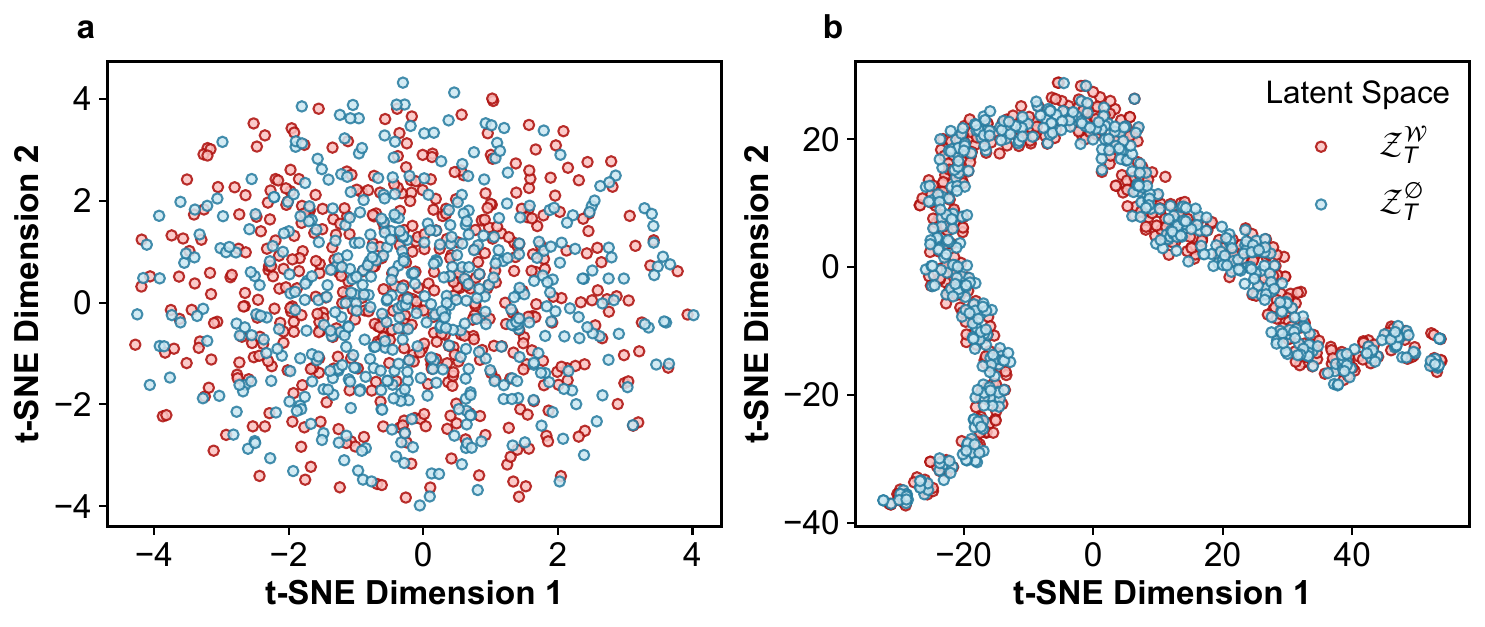}
\caption{Fidelity experiments of Q-Tag generation, (a) the initial starting latent with ($\mathcal{Z}_T^{\mathcal{W}}$) and without ($\mathcal{Z}_T^{\emptyset}$) watermark are indistinguishable through t-SNE; (b) the diffused latent with ($\mathcal{Z}_0^{\mathcal{W}}$) and without ($\mathcal{Z}_0^{\emptyset}$) watermark are also indistinguishable through t-SNE.}
\label{fig_tsne}
\end{figure}

\subsection{Robustness evaluation}\label{Robustness evaluation}

Ensuring robustness against structural attacks poses a significant challenge in watermark recovery. In practice, watermarked quantum circuits may undergo various attacks, such as gate insertion, deletion, that do not alter their functionality but can induce misalignments in the corresponding latent representations. These latent shifts disrupt the spatial correspondence required for reliable watermark extraction, particularly when the watermark is embedded at the input stage of the generative process.

To address this, we introduce a synchronization restoration mechanism designed to mitigate the effects of such structural attacks. The core idea is to prepend a zero-initialized alignment prior to the inversion process, effectively re-anchoring the latent structure and restoring compatibility with the original watermark embedding pattern. This lightweight correction allows the inversion path to converge more reliably to the intended watermarked latent, even under adversarial or noisy conditions. Detailed can be found in Methods section.

We evaluate the robustness of the proposed watermarking framework by measuring the TPR of watermark detection under both clean and adversarial conditions. Specifically, we first assess TPR across different circuit lengths without any attacks, followed by experiments simulating four categories of adversarial watermark erasure.

% \vspace{0.2cm}
\subsubsection{TPR under clean conditions}
Detection performance without distortion is shown in Fig.~\ref{robustness}(a), with the number of quantum gates ranging from 33 to 48. Across all configurations, the proposed method achieves a consistently high TPR ($\geq 90\%$), demonstrating reliable watermark embedding and extraction within QCGM-generated circuits. A clear upward trend is observed as the gate count increases, which can be attributed to the increased latent dimensionality. Specifically, the latent space width $f_w$ scales with gate number, and the overall latent size is given by $f_c \times f_h \times f_w$, matching the codeword length used for watermark embedding. Given a fixed message size (24 bits), longer codewords offer more redundancy for error correction, thereby improving detection reliability. In practice, increasing the default number of circuit gates further enhances robustness.

% \vspace{0.2cm}
\subsubsection{TPR under attacks}
We next evaluate the framework’s robustness against four types of attacks intended to erase the embedded watermark.

\noindent\textit{Gate Replacement.} 
Fig.~\ref{robustness}(b) presents the TPR after randomly replacing 1–5 gates in watermarked circuits. Across all gate counts and perturbation levels, the TPR remains above 85\%, and approaches 100\% for circuits containing 45–48 gates, confirming the method’s resilience to function-preserving substitutions.

\noindent\textit{Gate Appending.} Appending operations add 1–5 gates to the end of selected qubit lines. As shown in Fig.~\ref{robustness}(c), the TPR consistently exceeds 90\%, indicating robustness against appending-based obfuscation strategies that preserve output semantics but alter structure.

\noindent\textit{Gate Insertion.} In this scenario, 1–2 pairs of logically neutral gates (e.g., self-inverse pairs such as Hadamard–Hadamard) are inserted into random circuit positions. Detection performance remains strong across all settings, with TPR exceeding 95\%, as shown in Fig.~\ref{robustness}(d), demonstrating insensitivity to insertion-based perturbations.

\noindent\textit{Gate Deletion.} We randomly remove 1–3 gates from each watermarked circuit and evaluate the resulting TPR. As shown in Fig.~\ref{robustness}(e), the TPR remains above 90\% in all configurations, indicating that the watermark retains integrity even under structural deletions.

\begin{figure}[!t]
\centering
\includegraphics[width=1.0\linewidth]{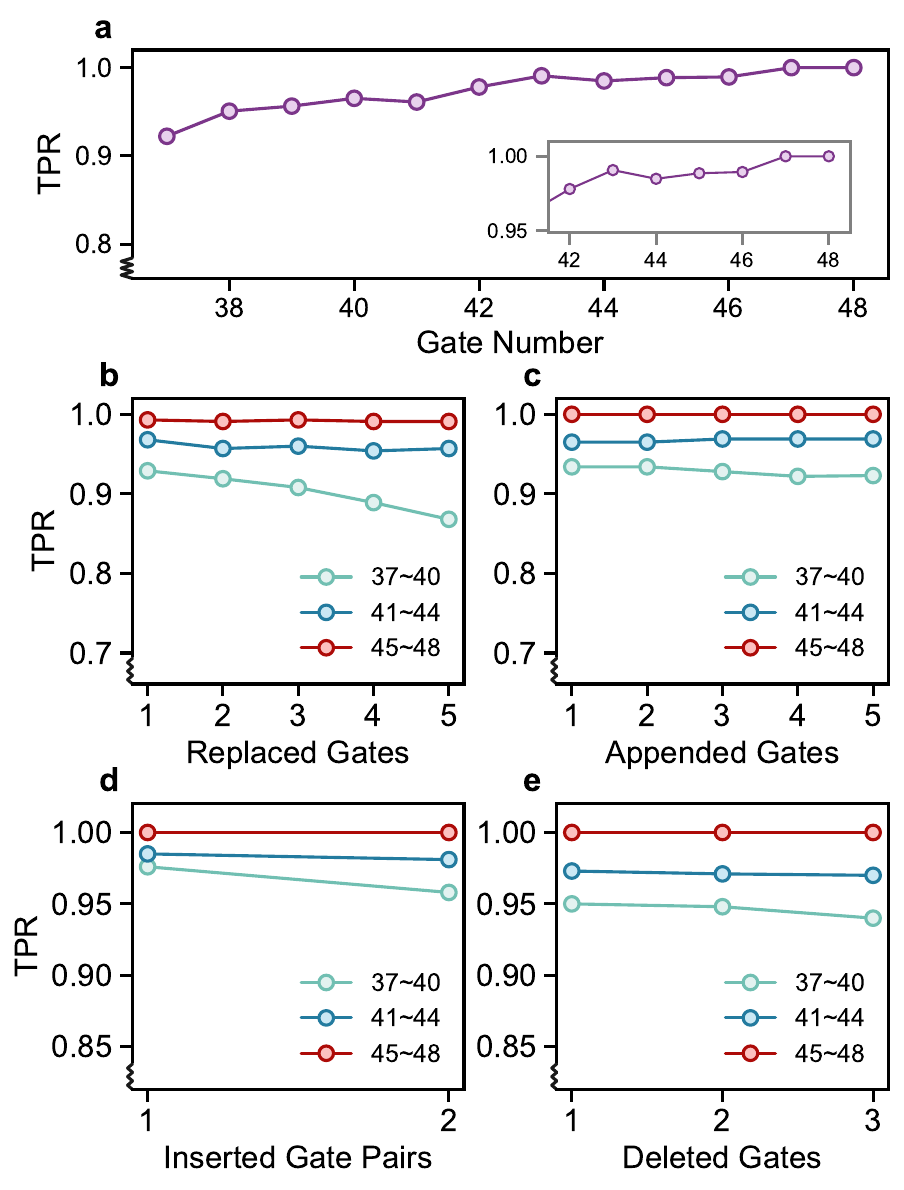}
\caption{True positive rate of watermark detection under different settings: (a) no-distortion, (b) replacement gates, (c) appending gates, (d) inserting gates and (e) deleting gates.}
\label{robustness}
\end{figure}

\begin{figure*}[!t]
\centering
\includegraphics[width=0.9\linewidth]{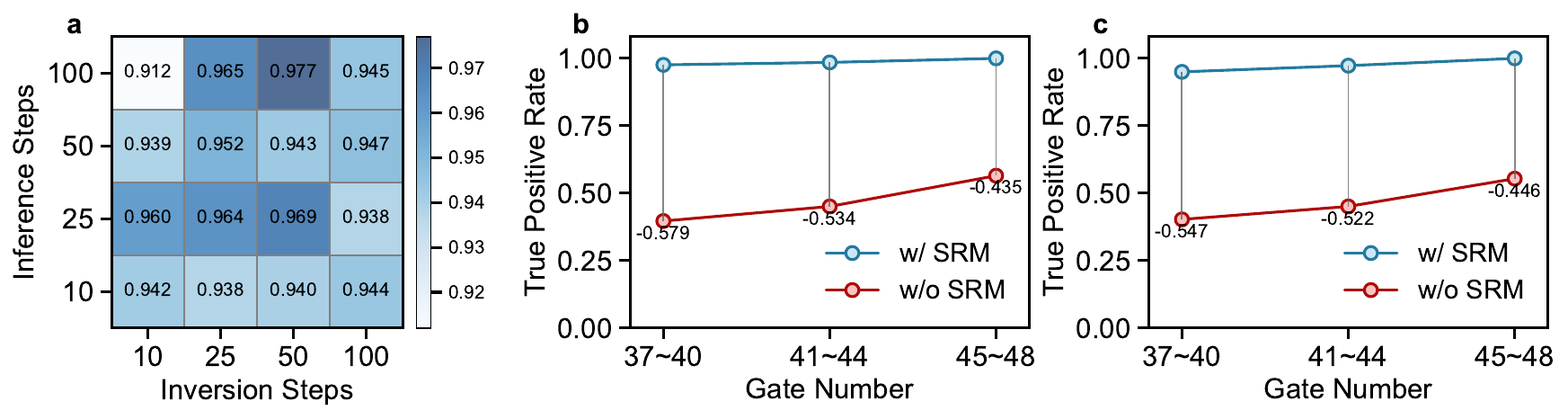}
\caption{Generalizability and ablation experiments: (a) the TPR of watermark detection with different inference and inversion steps; (b) the TPR of watermark detection with and without SRM under insertion attacks; (c) the TPR of watermark detection with and without SRM under deletion attacks.}
\label{fig_ablation_study}
\end{figure*}

\subsubsection{Comparative Studies}
To highlight the advantages of embedding watermarks during the circuit generation process, we compare our framework against representative watermarking approaches—specifically, the structural watermarking method of Roy et al.~\cite{Roy2025Watermarking}, which inserts redundant gate pairs into a generated circuit while preserving functionality. 
To ensure a fair and consistent comparison, all methods are evaluated under identical experimental conditions: circuits are generated with 8 qubits and the number of quantum gates ranging from 45 to 48.
Additionally, we also evaluate the performance under the four types of attacks described above: gate replacement, gate appending, gate insertion, and gate deletion.
The experimental results are presented in Table~\ref{Comparison}.
In the absence of attacks, all methods achieve 100\% accuracy in watermark extraction for copyright verification. 
However, under attack scenarios, the method of Roy et al~\cite{Roy2025Watermarking}. often fails to retain its embedded watermarks due to structural perturbations, resulting in extraction failure. 
In contrast, our method demonstrates strong robustness, reliably preserving watermark integrity and enabling successful extraction even in the presence of such attacks.

\begin{table*}[]
\centering
\caption{Comparison of the WQC}
\begin{tabular}{c|ccccccc}
\toprule[2pt]
\multicolumn{2}{c|}{Methods}        & Identity & Replacement & Appending & Insertion & Deletion &  \\ \midrule
\multicolumn{2}{c|}{WQC\cite{Roy2025Watermarking}-RGW}  & 1.000     & 0.398    & 0.402             & 0.405          & 0.201         &  \\
\multicolumn{2}{c|}{WQC\cite{Roy2025Watermarking}-RGIW} & 1.000     & 0.403    & 0.397             & 0.503          & 0.195         &  \\
\multicolumn{2}{c|}{Ours}           & 1.000    & 1.000    & 0.992             & 1.000          & 1.000         &  \\ \bottomrule[2pt]
\end{tabular}
\label{Comparison}
\end{table*}

% \vspace{.3cm}
\subsubsection{Generalizability}
\noindent \textit{Watermark Detection with Different Inference and Inversion Steps.}
In our experiments, the default configuration employs 50 generation steps during sampling and an identical 50-step inversion process for watermark extraction. However, real-world deployments may involve varied inference schedules depending on computational constraints or design preferences. To assess the generalization capability of the proposed framework under such conditions, we perform a series of cross-testing experiments by varying the number of generation steps (10, 25, 50, 100) and matching the inversion steps accordingly. The resulting TPR values across all tested combinations are reported in Fig.~\ref{fig_ablation_study}(a). In all cases, the framework maintains a TPR exceeding 91\%, demonstrating strong robustness and generalization across diverse generation–inversion step configurations.

% \subsection{Generalizability Test}
\vspace{0.2cm}
\noindent \textit{Robustness to Varying Guidance Scales.} In real-world scenarios, quantum circuit generation may be performed under different guidance scales, making it essential for the watermarking algorithm to remain effective across a range of such settings. To evaluate this, we generate 1,000 watermarked quantum circuits for each of four different guidance scales: 1.0, 2.5, 5.0, and 7.5. The corresponding watermark extraction results are summarized in Fig.~\ref{Adv_None_Kits}(a).

As the figure shows, the TPR remains consistently high ($>$95\%) across all tested scales, demonstrating that the proposed Q-Tag framework is robust to variations in guidance strength. This adaptability highlights the method's practical relevance for diverse generation configurations.

\begin{figure*}[ht]
\centering
\includegraphics[width=6.0in]{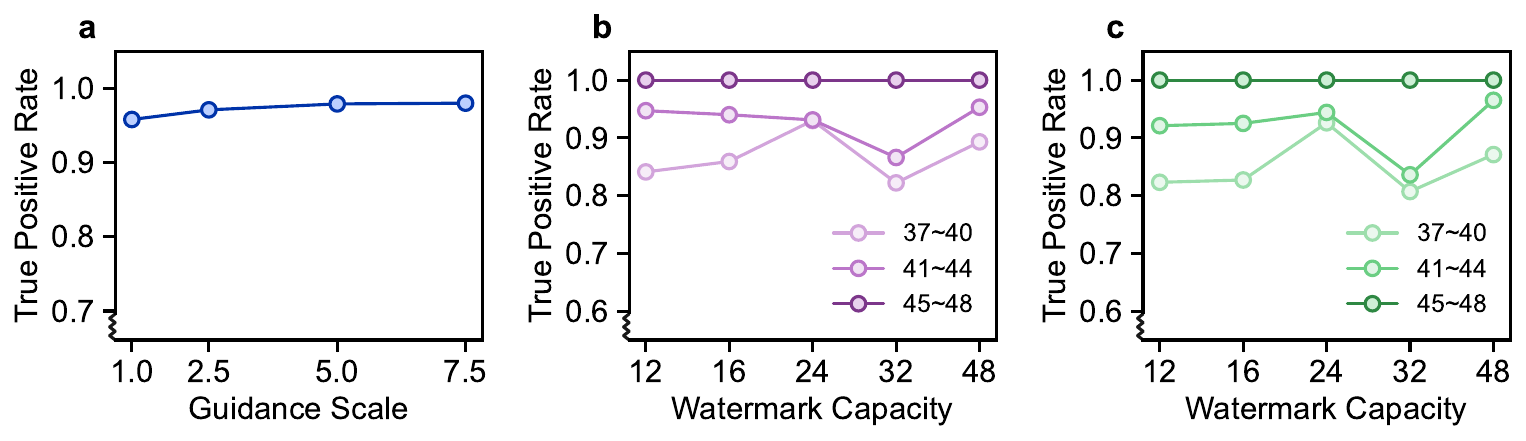}
\caption{(a) Comparison of the influence of different levels of text guidance on watermark extraction capability. (b) Different capacity sizes on watermark extraction capability without attacks. (c) Different capacity sizes on watermark extraction capability under attacks.}
\label{Adv_None_Kits}
\end{figure*}

\vspace{0.2cm}
\noindent \textit{Impact of Watermark Capacity.} We evaluate the TPR of watermark detection under varying watermark embedding capacities. Specifically, we test five capacity settings: 12, 16, 24, 32, and 48 bits. For each setting, we measure the TPR both without distortion and under distortion (including replacement and adversarial attacks), as shown in Fig. \ref{Adv_None_Kits}(b) and Fig. \ref{Adv_None_Kits}(c), respectively. To ensure consistency, the FPR is fixed at $10^{-3}$, and the corresponding detection thresholds are calibrated according to each bit length.

The results show that increasing the embedding capacity from 12 to 24 bits improves TPR, owing to a larger tolerance for bit errors under a fixed FPR—thereby enhancing detection reliability. However, when the capacity increases beyond 24 bits, TPR begins to decline. This is attributed to reduced redundancy in the error correction code, which weakens its ability to recover from distortion-induced bit errors. It is worth noting that, independent of embedding capacity, increasing the number of gates in the quantum circuit consistently improves TPR, due to the extended latent space available for robust watermark embedding.

% \vspace{.3cm}
\subsubsection{Importance of SRM}
To improve robustness against watermark attacks, we introduce a synchronization restoration mechanism. To evaluate its effectiveness, we conduct an ablation study. A random watermark $m$ is embedded into the model, and 10,000 quantum circuits (QCs) are generated. Each circuit is then subjected to insertion and deletion attacks, after which watermark extraction is performed both with and without SRM. The results are summarized in Fig.~\ref{fig_ablation_study}(b,c), where ``w/ SRM'' and ``w/o SRM'' denote extraction with and without synchronization restoration, respectively. The TPR achieved with SRM is consistently and significantly higher than that without SRM, with improvements exceeding 40 percentage points. These findings underscore the critical role of SRM in mitigating desynchronization effects introduced by structural perturbations, and highlight its importance in maintaining detection reliability under adversarial conditions.

\section{Conclusion}\label{Conclusion}
In conclusion, this work introduces the first watermarking framework explicitly designed for QCGMs, addressing a nontrivial challenge at the intersection of quantum design and model-level intellectual property protection. 
Our method departs from conventional circuit-level watermarking by embedding ownership signals directly into the generative process through a symmetric sampling mechanism, which non-invasively aligns watermark encoding with the model’s latent Gaussian prior—preserving fidelity without disrupting the generative distribution. 
To ensure robustness, we develop a synchronization restoration mechanism capable of recovering watermark integrity under structural perturbations. 
Beyond its immediate application to quantum circuits, the architectural principle underlying SRM, resynchronization of misaligned generative trajectories, offers a general insight for improving robustness in watermarking across structurally dynamic domains such as image editing, neural program synthesis, and video generation. 
This generalizability positions our framework as a foundational contribution to secure and accountable generation in both quantum and classical settings.

While this study establishes the first watermarking framework for quantum circuit generative models, several key challenges remain open. The current synchronization compensation mechanism relies on handcrafted heuristics, which may prove inadequate against sophisticated or adaptive adversaries. Furthermore, the watermark embedding capacity is limited, and the extraction process assumes a trusted environment. Future work could explore adaptive, learning-based alignment strategies, scalable embedding techniques with higher payload efficiency, and cryptographically verifiable extraction protocols. These directions would not only improve the resilience and scalability of quantum watermarking but also extend its applicability to a broader class of generative models. As quantum circuit generation increasingly underpins quantum software development, our framework lays the groundwork for secure, verifiable, and accountable content creation in the age of AI-assisted quantum design.

\bibliographystyle{unsrt}
\bibliography{references}

\vfill

\end{document}